\DeclarePairedDelimiter\ceil{\lceil}{\rceil}
\newtheorem{lemma}{Lemma}[]
\newtheorem{theorem}{Theorem}[]
\newtheorem{definition}{Definition}[]
\newtheorem{remark}{Remark}[]
\newcommand{\R}{\mathbb{R}}
\newcommand{\Z}{\mathbb{Z}}
\newcommand{\A}{\mathcal{A}}
\newcommand{\B}{\mathcal{B}}
\newcommand{\X}{\mathcal{X}}
\newcommand{\Y}{\mathcal{Y}}
\newcommand{\F}{\mathcal{F}}
\newcommand\norm[1]{\left\lVert#1\right\rVert}
\newcommand{\comp}[1]{\overline{#1}}
\title{Are We Still Missing an Item?}
\author{
  Roey Magen\footnote{Department of Computer Science and Applied Mathematics, Weizmann Institute of Science, Rehovot, Israel. Email: \href{mailto:roey.magen@weizmann.ac.il}{roey.magen@weizmann.ac.il}.
  }
}
\begin{document}
\maketitle
\begin{abstract}
The missing item problem, as introduced by Stoeckl in his work at SODA 23, focuses on continually identifying a missing element $e$ in a stream of elements ${e_1, ..., e_{\ell}}$ from the set $\{1,2,...,n\}$, such that $e \neq e_i$ for any $i \in \{1,...,\ell\}$. Stoeckl's investigation primarily delves into scenarios with $\ell<n$, providing bounds for the (i) deterministic case, (ii) the static case- where the algorithm might be randomized but the stream is fixed in advanced and (iii) the adversarially robust case- where the algorithm is randomized and each stream element can be chosen depending on earlier algorithm outputs. Building upon this foundation, our paper addresses previously unexplored aspects of the missing item problem.

In the first segment, we examine the static setting with a long stream, where the length of the steam $\ell$ is close to or even exceeds the size of the universe $n$. We present an algorithm demonstrating that even when $\ell$ is very close to $n$ (say $\ell=n-1$), polylog($n$) bits of memory suffice to identify the missing item. When the stream’s length $\ell$ exceeds the size of the universe $n$ i.e. $\ell = n +k$, we show a tight bound of roughly $\Tilde{\Theta}(k)$.

The second segment focuses on the {\em adversarially robust setting}. We show a lower bound for a pseudo-deterministic error-zero (where the algorithm reports its errors) algorithm of approximating $\Omega(\ell)$, up to polylog factors. Based on Stoeckl’s work and the previous result, we establish a tight bound for a random-start (only use randomness at initialization) error-zero streaming algorithm of roughly $\Theta(\sqrt{\ell})$.

In the final segment, we explore streaming algorithms with randomness-on-the-fly, where the random bits that are saved for future use are included in the space cost. For streams with length $\ell = O(\sqrt{n})$, we provide an upper bound of $O(\log n)$. This establishes a gap between randomness-on-the-fly to zero-error random-start.

\end{abstract}
\section{Introduction}
Streaming algorithms act on flows of massive data that arrive rapidly and cannot be stored due to a lack of memory (or lack of bandwidth to the memory).
Such algorithms are applied across a wide spectrum of domains, from network monitoring and financial analytics to recommendation systems and online advertising.
A typical assumption when designing and analyzing streaming algorithms is that the entire
stream is fixed in advance and is just provided to the streaming algorithm one item at a time, or
at least that the choice of the items in the stream is independent of the random bits of the streaming algorithm.  

Recently, there has been a growing interest in streaming algorithms that maintain correctness even when the choice of
stream items depends on previous answers given by the streaming algorithm and can hence may depend
on the internal state and the random bits of the algorithm. Such streaming algorithms are said to be adversarially robust. One way to tackle such an adversarial setting is the notion of pseudo-determinism: A pseudo-deterministic algorithm is a (randomized) algorithm that, when run multiple times on the same input, each time an independent choice of randomness, outputs the same output on all executions with high probability over its random coins. A pseudo-deterministic algorithm is necessarily adversarially robust since its outputs (with high probability) do not reveal any information about its internal state.  Another interesting question is whether large amounts of random bits give any advantage in streaming tasks. In other words, does the space complexity change when the random bits used by the streaming algorithm are included in the space cost?
In particular, the issue of whether there are general compilers that translate from one model to another comes up.

To better understand the differences between all these models, \cite{stoeckl2023streaming} studied the missing item finding problem (MIF): Given a data stream $e_1, \ldots, e_\ell$ of elements in $\{1,2,\dots,n\}$, possibly with repetitions, the goal of $MIF(n, \ell)$ problem is to find some $e \in \{1,2,\dots,n\}$ which has not appeared in the stream so far. I.e.\ $e \neq e_i$ for any $i \in \{1,2,\ldots,\ell\}$.  
This problem is of interest as it is a simple and natural search problem, that exhibits significantly different space complexities for classical
randomized algorithms, adversarially robust algorithms, deterministic algorithms and, as we will see, also in the "random start model", where all the random bits that have been used by the streaming algorithm are included in the space cost. Some of Stoeckl's results are listed in Table \ref{table:stoeckl-results}.

This work tackles some open questions and explores uncharted territories of the Missing Item Problem. We make the following contributions:  
\begin{itemize}
\item In Section~\ref{subsec:length-of-stream-close-n} we study the classical model, where the stream is fixed in advance. Stoeckl showed that the space complexity of randomized algorithms for $MIF(n,\ell)$ with $\ell \leq n-n/polylog(n)$ and $\delta = 1/poly(n)$ is $polylog(n)$. However, when the length of the stream $\ell$ is close to $n$ (say $\ell = n-\sqrt{n}$), the upper bound becomes polynomial in $n$ (say $\sqrt{n}$), while the lower bound becomes trivial (say $\Omega(1)$), which yields a large gap between them. For this range we show an upper bound of roughly $O(\log^2(n))$ for the space complexity. 

\item In Section \ref{subsec:mif-long-regime}, we study the $MIF(n,\ell)$ problem in the long regime where the length of the stream $\ell$ exceeds the size of the universe $n$. For $\ell = n+k$, we show a tight bound for the space complexity of roughly $\tilde{\Theta}(k)$.
        
\item In Section \ref{subsec:pseudo-deterministic-random- start}, we study the $MIF(n,\ell)$ problem in an adversarial setting, where the streaming algorithm should provide correct answers (with high probability) even when the stream updates are chosen by an adversary who may observe and react to the past outputs of the algorithm. First, we show a lower bound for a pseudo-deterministic error-zero algorithm of roughly $\Tilde{\Omega}(\ell)$. Based on the work of Stockel, we show how to convert this result into a lower bound for a random-start error-zero streaming algorithm of $\Tilde{\Omega}(\sqrt{\ell})$. 
Finally, we study the case of streaming algorithms with randomness-on-the-fly, where random bits that are saved for future use are included in the space cost. In this setting, for a stream with length $\ell = o(\sqrt{n})$, We show an upper bound of $O(\log n)$.
\end{itemize}

\begin{table}[h!]
\begin{center}
\begin{tabular}{ | p{6em} | p{6cm} | p{6cm}| } 
Model & Lower Bound & Upper Bound \\
 \hline \hline
 Classical & $\Omega\left(\sqrt{\frac{\log(1/\delta)}{\log(n)}}+\frac{\log(1/\delta)}{\log(n)(1+\log(n/\ell))}\right)$ & $min(\ell, \frac{\log(1/\delta)}{\log(n/\ell)})$  \\ 
 \hline
 Adv.\ Robust & $\Theta(\ell^2/n+\log(1-\delta))$ & $O\left(min(\ell, \left(1+\frac{\ell^2}{n}+\ln(\frac{1}{\delta})\right)\cdot \log \ell)\right)$ \\ 
 \hline
 Deterministic & $\Theta\left(\sqrt{\ell} + \frac{\ell}{1+\log(n/\ell)} \right)$ & $O\left(\sqrt{\ell \log \ell} + \frac{\ell \log \ell}{\log n} \right)$ \\ 
 \hline
  Random start & \text{Conditional lower bound of} $\Theta(\frac{\sqrt{\ell}}{poly\log n})$ \tablefootnote{Assuming lower bound for pseudo-deterministic algorithms of
$\Omega(\ell/ polylog n)$ bits of memory.}  & $O\left((\sqrt{\ell}+\ell^2/n)\log n\right)$ \\ 
\hline
\end{tabular}
\caption{Some of \cite{stoeckl2023streaming} results.}
\label{table:stoeckl-results}
\end{center}
\end{table}

In a parallel work, \cite{chakrabarti2023random} showed a similar lower bound for pseudo deterministic algorithm for the MIF problem, which implies a similar lower bound as ours for the random start model. Their results are more general than in this work since it holds also in the non-error-zero model. On the other hand, their proof is much more complicated than ours.

\section{Related Problems}\label{sec:related-problem}

\textbf{Adversarial streams:}  A streaming algorithm is called adversarially robust if its performance is guaranteed to hold even when the elements in the stream are chosen by an adaptive adversary, possibly as a function of previous estimates given by the streaming algorithm. Recently,  there has been a growing interest in studying the question of whether tasks that can be done with low memory against a stream that is fixed in advance can be done with low memory even again adaptive adversary.    

On the positive side, \cite{ben2022framework} showed general transformations for a family of tasks, for turning a streaming algorithm to be adversarially robust (with some overhead).

On the negative side,  \cite{kaplan2021separating} showed a problem that requires only a polylogarithmic amount of memory in the static case but any adversarially robust algorithm for it requires a polynomial space. \cite{chakrabarti2021adversarially} study the problem of maintaining a valid vertex coloring of an $n$-vertex graph with maximum degree $d$ given a stream of edges. In the standard, non-robust, streaming setting, ($d+1$)-colorings can be obtained while using only $\Tilde{O}(n)$ space, but any adversarially robust algorithm requires a linear amount of space, namely $\Omega(nd)$.

\noindent
\textbf{Card Guessing:} The card guessing, introduced by \cite{menuhin2021keep}, is a game between two players, Guesser and Dealer. At the beginning of the game, the Dealer holds a deck of $n$ cards (labeled $1,...,n$). For $n$ turns, the Dealer draws a card from the deck, the Guesser guesses which card was drawn, and then the card is discarded from the deck. The Guesser receives a point for each correctly guessed card, and its goal is to maximize its expected score. One can think about card guessing as a streaming problem where the algorithm tries to predict the next element in a stream or alternatively tries to predict an element that has not appeared in the stream yet. The promise is that all the elements in the stream are \textbf{unique}, which is the main difference from MIF and makes it an easier problem. One more difference is that in Card Guessing, the Guesser is allowed to make an incorrect guess, whereas in MIF the algorithm must report a missing item in all of the steps. 

They showed a probabilistic Guesser that uses $O(\log(1/\delta)\log^2 n)$ memory bits that expected to score $(1-\delta)\ln n$ correct guesses in a game against any static
Dealer (i.e.\ the stream is fixed in advance), by following the items that appeared in various subsets of $[n]$, and reconstructing the {\em unique} missing item in {\em some} subset. They showed a corresponding lower bound of $\Omega(\log^2 n)$. In contrast, for the adversarial case they showed that for every $m$ there exists an adaptive Dealer against which any
Guesser with $m$ bits of memory can score in expectation at most $\ln m + 2 \ln \log n + O(1)$ correct guesses.

\noindent
\textbf{Mirror Games:} In the Mirror Games \cite{garg2018space}, Alice and Bob take turns (with Alice playing first) in declaring numbers from the set $\{1,2, \ldots, 2n\}$. If a player picks a number that was previously played, that player loses the game and the other player wins. If all numbers are
declared without repetition, the result is a draw. Bob has a simple mirror strategy that assures he won't lose the game and requires no memory. On the other hand, Garg and Schneider showed that every deterministic Alice
requires memory of size that is proportional to $n$ in order to secure a draw. Feige~\cite{feige2019randomized} showed a randomized strategy for Alice that manages to draw against any strategy of Bob with probability at least $1-\frac{1}{n}$ and requires $O(\log^3 n)$ bits of memory. \cite{magen2023mirror} showed that for any white box Alice (Bob knows Alice's memory but not her future coin flips) and at most $n/4c$ bits of memory, there is a Bob that wins with probability close to $1-2^{-c/2}$.

\section{Preliminaries}
\textbf{Notations:} Let $[n]$ be the set $\{1,2,\dots,n\}$.  We use standard big-Oh notation, with  $\Theta(\cdot), \Omega(\cdot), O(\cdot)$ hiding constants and $\Tilde{\Theta}(\cdot), \Tilde{\Omega}(\cdot),\Tilde{O}(\cdot)$ hiding constants and factors that are polylogarithmic in the problem parameters. 

\subsection{Streaming Algorithms and Models}
A data stream of length $\ell$ over a domain $[n]$ is a sequence of updates of the form
$$(e_1,\Delta_1),(e_2,\Delta_2),\dots,(e_m,\Delta_m)$$ 
where $e_i \in [n]$ is a streaming item and $\Delta_j \in \Z$ is an increment or decrement to that item. Observe that in $MIF$, we have that $\Delta_j = 1$ for any $j \in [\ell]$. 
The frequency vector $f \in \R^n$ of the stream is the vector with $i^{th}$ coordinate $f_i = \sum_{j\leq \ell:e_j = i} \Delta_j$. In the insertion-only model, the updates are assumed to be positive, meaning $\Delta_j > 0$, whereas
in the turnstile model $\Delta_j$ can be positive or negative.

\noindent
\textbf{Perfect $L_p$ Samplers:} A perfect $L_p$ sampler is a sampling algorithm from a stream that should output an item $i\in[n]$ with probability  $|f_i|^p  / \norm{f}_p^p$, and is allowed to fail with a certain probability. We will be using a construction of ~\citet{jayaram2021perfect} for our purposes. 

\begin{definition}[Pseudo-deterministic Streaming Algorithm] \label{def:pseudo-deterministic} A pseudo-deterministic streaming algorithm is a (randomized) streaming algorithm $A$ such that for all valid input streams $\sigma =
e_1, \ldots , e_\ell \in [n]^\ell$, there exists a corresponding $e_\sigma \in \R$ for which  
\begin{align*}
    \Pr_r [\A(\sigma,r) = e_\sigma] \geq \frac{2}{3}.
\end{align*}
\end{definition}

\begin{definition}[Zero Error Algorithm] \label{def:zero-error} A zero error algorithm is a randomized algorithm that with probability $1$ outputs a correct answer or $FAIL$. 
\end{definition}
\begin{definition}[Random Start Algorithm] \label{def:random-start}
A random start algorithm is a deterministic algorithm with an initial random state. In this model, all random bits used are included in the space cost. The algorithm only has access to the randomness it had when it started.
\end{definition}

\begin{definition}[Streaming Algorithm with randomness on the fly] \label{def:randomness_on_the_fly} An algorithm with randomness on the fly is a randomized algorithm with access to new random bits in each turn but still has limited memory. Namely, random bits that should read at the same time or random bits that are saved for future use are included in the space cost.
\end{definition}
\subsection{Communication Complexity}
Consider a predicate $f\colon \X \times \Y \rightarrow \Z$ where we assume in the typical case that $\X=\Y=\{0,1\}^n$ and $Z=\{0,1\}$. Suppose Alice’s input is an element $x$ from $\X$ and Bob’s input is an element $y$ from $\Y$. A communication protocol is an algorithm to generate a conversation between Alice and Bob to compute $f(x,y)$. The total number of bits sent from Alice to Bob and vice versa is the complexity of the protocol. A one-way communication protocol is a communication protocol where only Alice can send a message to Bob, and then Bob needs to report his answer.   

The celebrated  theorem regarding  the  communication complexity of disjointness as phrased in \citep{rao2020communication,goos2016communication,braverman2013information}) ia:
\begin{theorem}[Theorem 6.19 in \cite{rao2020communication}] \label{thm-cc-disjointness-lower-bound}
    Any randomized communication protocol that computes disjointness function with error $1/2-\epsilon$ must have communication $\Omega(\epsilon n)$.
\end{theorem}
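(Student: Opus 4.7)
The plan is to prove the lower bound via the information-complexity framework, in the style of Bar-Yossef--Jayram--Kumar--Sivakumar (and, as in \cite{rao2020communication}, using internal information cost as the central complexity measure). The high-level strategy is: design a hard product distribution, show that any protocol solving disjointness with advantage $\varepsilon$ must reveal $\Omega(\varepsilon n)$ bits of information on average about the inputs, and then invoke the standard inequality that expected communication is lower bounded by internal information cost.

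First, I would fix a single-coordinate distribution $\nu$ on $\{0,1\}^2$ supported on the always-disjoint pairs $\{(0,0),(0,1),(1,0)\}$, and take the product $\mu=\nu^{\otimes n}$. Under $\mu$ the disjointness value is constantly $0$, so $\mu$ serves purely as a reference measure against which leakage is counted. For a protocol $\Pi$ I would define the internal information cost
\[
IC_\mu(\Pi)\;=\;I(X;\Pi\mid Y)+I(Y;\Pi\mid X),
\]
and appeal to the standard entropy-chain argument showing $\mathbb{E}_\mu[|\Pi|]\geq IC_\mu(\Pi)$. It then suffices to prove $IC_\mu(\Pi)=\Omega(\varepsilon n)$ for every protocol $\Pi$ that computes disjointness with error $\tfrac12-\varepsilon$.

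Second, I would invoke a direct-sum reduction. Because $\mu$ is a product distribution, the chain rule of mutual information yields $IC_\mu(\Pi)\geq \sum_{i=1}^{n} IC_\nu(\Pi_i)$, where the single-coordinate protocol $\Pi_i$ is obtained by embedding the $i$th coordinate pair into a full disjointness instance: Alice and Bob use shared randomness to sample the remaining $n-1$ coordinate pairs from $\nu$ and then simulate $\Pi$. A planting/hybrid argument shows that the overall $\varepsilon$ advantage on disjointness translates, on average over a uniformly random planted coordinate $i$, into an $\textsc{And}$-protocol $\Pi_i$ with advantage $\Omega(\varepsilon)$ under $\nu$.

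The main obstacle is the single-coordinate lower bound: I must show $IC_\nu(\Pi_{\textsc{And}})=\Omega(\varepsilon)$ for any protocol computing $\textsc{And}$ with advantage $\varepsilon$ under $\nu$. This is the technical heart of the proof and is typically handled via the \emph{cut-and-paste lemma} (transcript distributions on $(0,1),(1,0)$ are exchangeable with those on $(0,0),(1,1)$) combined with a Hellinger-distance / Pythagorean inequality that converts an $\varepsilon$-distinguishing advantage on the planted entry into the claimed $\Omega(\varepsilon)$ information lower bound, carefully tracking the $\varepsilon$-dependence so that no square is introduced. Summing the $n$ per-coordinate contributions gives $IC_\mu(\Pi)=\Omega(\varepsilon n)$, and the information-vs-communication inequality then delivers the $\Omega(\varepsilon n)$ bound on $\mathbb{E}[|\Pi|]$, completing the proof.
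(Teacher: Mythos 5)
First, note that the paper does not prove this statement at all: it is imported verbatim as Theorem 6.19 of \cite{rao2020communication}, so there is no internal proof to compare against; your attempt has to be judged as a from-scratch proof of the cited result.

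As such, your plan follows the standard information-complexity route (hard distribution $\nu$ on the three AND-zero inputs, direct sum across coordinates, single-coordinate lower bound via cut-and-paste/Hellinger), but it has a genuine gap exactly at the step you flag as ``carefully tracking the $\epsilon$-dependence so that no square is introduced.'' The tools you invoke cannot avoid the square: mutual information is bounded below by \emph{squared} Hellinger distance (and Hellinger dominates statistical distance only up to another square-type loss), while an $\epsilon$-advantage protocol only guarantees that the relevant transcript distributions are at statistical distance $\Omega(\epsilon)$. Chasing this through the cut-and-paste and Pythagorean inequalities yields $\Omega(\epsilon^2)$ information per coordinate and hence only $\Omega(\epsilon^2 n)$ total --- strictly weaker than the claimed $\Omega(\epsilon n)$, which is the sharp bound (it is tight: publicly subsampling $O(\epsilon n)$ coordinates and solving disjointness there exactly gives advantage $\Omega(\epsilon)$). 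Proving the linear-in-$\epsilon$ bound in the small-advantage regime requires additional ideas beyond the plain BJKS machinery (this is precisely the contribution of Braverman--Moitra-style analyses of small-advantage AND, and of the specific argument in Chapter 6 of \cite{rao2020communication}); your sketch asserts the $\Omega(\epsilon)$ per-coordinate bound without supplying such a mechanism. A secondary, fixable issue: $\nu$ is correlated between Alice and Bob (it is not a product distribution on $\{0,1\}^2$), so the chain-rule direct sum and the embedding of the $i$th coordinate cannot be done with shared randomness alone as written; one needs the standard auxiliary partition variables $D_i$ (publicly choose which player's bit is fixed to $0$, the other sampled privately) to make both the sampling and the conditional-information accounting go through.
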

\textbf{The avoid(t, a, b) communication task} This one-way communication game was introduced by \cite{chakrabarti2021adversarially}. In it, Alice is given a set $S_A \subset [t]$ with $|S_A| = a$, and sends a message to Bob, who must produce a set
$S_B \subseteq [t]$ with $|S_B| = b$ where $S_B$ is disjoint from $S_A$.

\begin{lemma}[Lemma 4.1, from \cite{chakrabarti2021adversarially}] \label{lemma:avoid-lower-bound}
    The public-coin $\delta$ error one-way communication complexity
of $avoid(t,a,b)$ is at least $\frac{ab}{t ln 2} + \log(1-\delta)$
\end{lemma}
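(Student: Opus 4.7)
The plan is to lower bound the communication via a direct counting argument against the uniform distribution on Alice's input, combined with Yao's minimax principle to absorb the public randomness. First, I would fix the public coins optimally: averaging over the public randomness, there exists a setting of the coins under which the induced \emph{deterministic} one-way protocol still succeeds with probability at least $1-\delta$ when $S_A$ is drawn uniformly from $\binom{[t]}{a}$. So it suffices to bound the length of any deterministic protocol that succeeds with probability $1-\delta$ on the uniform distribution.

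Next, suppose Alice sends at most $c$ bits. Then Alice's message partitions $\binom{[t]}{a}$ into at most $2^c$ classes, and on message $m$ Bob outputs a fixed $b$-set $S_B(m) \subseteq [t]$. The protocol succeeds on $S_A$ iff $S_A \cap S_B(M(S_A)) = \emptyset$, so the number of successful inputs is at most $\sum_m |\{S_A : S_A \cap S_B(m) = \emptyset\}| \leq 2^c \binom{t-b}{a}$. Dividing by $\binom{t}{a}$ gives
\begin{align*}
1-\delta \;\leq\; \Pr[\text{success}] \;\leq\; 2^c \cdot \frac{\binom{t-b}{a}}{\binom{t}{a}}.
\end{align*}

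To finish, I would simplify the binomial ratio using
\begin{align*}
\frac{\binom{t-b}{a}}{\binom{t}{a}} \;=\; \prod_{i=0}^{a-1} \frac{t-b-i}{t-i} \;\leq\; \left(1-\frac{b}{t}\right)^{a},
\end{align*}
since each factor is at most $1-b/t$ (as $t-i \leq t$). Taking $\log_2$ of the resulting inequality and applying the standard bound $-\log_2(1-x) \geq x/\ln 2$ for $x \in [0,1)$ yields
\begin{align*}
c \;\geq\; \log(1-\delta) + a \log \frac{t}{t-b} \;\geq\; \log(1-\delta) + \frac{ab}{t \ln 2},
\end{align*}
which is precisely the claimed bound.

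The only mildly subtle step — rather than a genuine obstacle — is the transition from the public-coin protocol to a fixed-randomness protocol against a hard distribution on inputs (Yao's minimax principle), and verifying that the success probability $1-\delta$ is preserved in expectation over the public coins. The remaining work is elementary counting plus the standard inequality $-\ln(1-x)\geq x$.
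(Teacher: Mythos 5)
Your proof is correct: the averaging step to fix the public coins against the uniform distribution on $\binom{[t]}{a}$, the counting bound $1-\delta \leq 2^c\binom{t-b}{a}/\binom{t}{a}$, and the estimates $\binom{t-b}{a}/\binom{t}{a} \leq (1-b/t)^a$ and $-\log_2(1-x) \geq x/\ln 2$ together give exactly the stated bound (with the degenerate case $a > t-b$ holding vacuously). Note that this paper does not prove the lemma at all --- it imports it from \cite{chakrabarti2021adversarially} --- and your argument is essentially the same standard counting proof given in that reference, so there is nothing further to reconcile.
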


\section{Classical Model: Missing Item with a Long Stream} \label{sec:classical-model-long-regime}
In this section, we study the classical randomized model, where the stream is fixed in advance. We focus on long streams, where the length of the stream $\ell$ is close to the size of the universe $n$ or even bigger than it.

\subsection{When the stream  length is close to the size of the universe} \label{subsec:length-of-stream-close-n}

We begin our exploration by studying $MIF(n,\ell)$ when $\ell < n$. In this range, \cite{stoeckl2023streaming} showed an upper bound of $min(\ell, \frac{\log(1/\delta)}{\log(n/\ell)})$, where $\delta$ is the error. This is achieved by an algorithm that maintains a list of random items from $[n]$, hoping to find an unused number at the end of the stream. When $\ell = n-k$, the size of that list should be roughly $n/k$. Indeed, where $k$ is sublinear in $n$ (say $k = \sqrt{n}$) we have 
   On the one hand, we have that $\log(n/(n-k)) = -\log(1-k/n)$. By L'Hopital's rule, we have that $\lim_{x\rightarrow 0} \log(1+x)/x = 1$. Therefore, $-\log(1-k/n)$
goes to zero as $k/n$, this implies that Stoeckl's upper bound scale as $n/k$, which becomes polynomial in $n$, where $k \leq n^c$ for some constant $c<1$.
On the other hand, Stoeckl showed a lower bound of $\Omega\left(\sqrt{\frac{\log(1/\delta)}{\log(n)}}+\frac{\log(1/\delta)}{\log(n)(1+\log(n/\ell))}\right)$.  
Observe that $1+\log(n/\ell) = 1-\log(1-k/n)$ goes to $1$, when $n$ goes to infinity. This implies a trivial lower bound of $\Omega(1)$, for constant error.

We aim to close this gap. We will be using $L_1$ samplers from a stream, i.e.\ ones that sample an item with probability proportional to the t frequency of an item in $L_1$ norm. We apply the following result about low space $L_1$ samplers:

\begin{theorem}[Thm.~9 from \citet{jayaram2021perfect}] \label{thm: perfect-l1-sampler}
There is an algorithm $\A$ which, on a general turnstile stream $f$, outputs $i \in [n]$ with probability $(1 \pm v)|f_i| / \norm{f}_1 + O(n^{-c})$ for some constant $c \geq 1$, and outputs FAIL with probability at most $\delta_1$. Conditioned on outputting some $i \in [n]$, $\A$ will then output $\Tilde{f}_i$ (the estimation for $f_i$) such that $\Tilde{f}_i = (1 \pm \epsilon)f_i$ with probability $1-\delta_2$. The space required is $$O\bigl( \left(\log^2 n (\log\log n)^2+\beta \log n \log(1/\delta_2)\right)\log(1/\delta_1) \bigr),$$ where $\beta = \min \{\epsilon^{-2}, \max \{\epsilon^{-1}, \log(1/\delta_2)\} \}$.  
\end{theorem}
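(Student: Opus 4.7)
The plan is to use the \emph{precision sampling} framework combined with an $\ell_\infty$ heavy-hitter subroutine and a pseudorandom-generator derandomization. First I would introduce auxiliary iid random scalings $u_1, \ldots, u_n$ (for instance, independent exponential random variables) and implicitly define the scaled vector $z_i = |f_i|/u_i$, converting each update $(e_j, \Delta_j)$ on $f$ into the update $(e_j, \Delta_j/u_{e_j})$ on $z$. The well-known observation driving the sampler is that
\begin{align*}
    \Pr\bigl[\arg\max_{i \in [n]} z_i = j\bigr] \;=\; \frac{|f_j|}{\norm{f}_1},
\end{align*}
so it suffices to recover from the scaled stream the coordinate of maximum magnitude, together with a good approximation of its frequency in $f$.

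Next I would implement this recovery using a CountSketch-type $\ell_\infty$ heavy-hitter data structure on $z$, and in parallel a CountSketch on $f$ itself to produce an estimate $\tilde f_{i^\ast} = (1 \pm \epsilon) f_{i^\ast}$ of the sampled coordinate's frequency. The key analytic step is to show that, conditional on the identity of the arg-max $i^\ast$, the value $z_{i^\ast}$ dominates the $\ell_2$-norm of the tail of $z$ by an $\Omega(\sqrt{\log n})$ factor with constant probability, which is precisely the slack CountSketch needs to decode $i^\ast$. The factor $\beta$ in the claimed space bound reflects the standard tradeoff between sketch size and the $(\epsilon, \delta_2)$-accuracy of the frequency estimate, and the outer $\log(1/\delta_1)$ factor is obtained by amplifying success via independent repetitions.

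The main obstacle is twofold. First, storing truly random $u_i$'s naively costs $\Omega(n)$ bits, so they must be generated from a Nisan-style pseudorandom generator of seed length $O(\log^2 n (\log\log n)^2)$ that fools the bounded-space decoder inside CountSketch; this produces the leading term in the space bound and is also the source of the benign $O(n^{-c})$ additive slack in the output distribution. Second, achieving the \emph{perfect} (zero multiplicative distortion) guarantee---as opposed to the $(1 \pm v)$-distorted distribution of basic precision sampling---requires an extra rejection/resampling step that declares FAIL on a small ``ambiguous'' subset of coin configurations, together with a delicate conditional analysis verifying that the resulting output distribution matches $|f_i|/\norm{f}_1$ exactly up to the $n^{-c}$ additive error. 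I expect this perfection argument to be the main technical hurdle, since controlling the remaining conditional bias requires carefully quantifying the failure events of both the PRG fooling and the CountSketch decoding in the same analysis.
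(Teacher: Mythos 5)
This statement is not proved in the paper at all: it is quoted verbatim as an external result (Theorem~9 of \citet{jayaram2021perfect}) and used as a black box inside Algorithm~\ref{alg:mif-static}, so there is no in-paper proof to compare against. Your outline does track the actual strategy of the cited work at a high level --- precision sampling via independent exponential scalings so that the arg-max of the scaled vector is distributed exactly as $|f_j|/\norm{f}_1$, recovery of the arg-max and of $\Tilde{f}_{i^\ast}$ by CountSketch-type sketches, a rejection step to turn the approximate sampler into a ``perfect'' one up to $O(n^{-c})$ additive error, and a Nisan-style derandomization to avoid storing the $u_i$'s --- so as a reconstruction of the provenance of the theorem it is on target.

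As a proof, however, it is only a roadmap, and the two places you yourself flag as ``the main hurdles'' are exactly where all the content of the cited proof lives: the argument that the conditional distribution of the output, after conditioning on the sketch decoding succeeding and on the PRG substitution, still equals $|f_i|/\norm{f}_1$ up to an additive $n^{-c}$ (rather than acquiring a multiplicative bias tied to the identity of $i^\ast$), and the accounting that the PRG seed and the repetition for failure probability $\delta_1$ yield the stated $O\bigl((\log^2 n(\log\log n)^2+\beta\log n\log(1/\delta_2))\log(1/\delta_1)\bigr)$ bound with $\beta$ as claimed. Neither is carried out. One quantitative claim is also stated in a doubtful form: you assert that $z_{i^\ast}$ dominates the $\ell_2$-norm of the tail of $z$ by an $\Omega(\sqrt{\log n})$ factor with constant probability. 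What the analysis actually needs is the weaker heavy-hitter condition that $z_{i^\ast}$ exceeds roughly a $1/\sqrt{b}$ fraction of the $\ell_2$-norm of the remaining scaled coordinates (for $b$ CountSketch buckets), and the $\sqrt{\log n}$ enters because the scaled vector's tail $\ell_2$-norm is only $O(\sqrt{\log n})\cdot\norm{f}_1$ with constant probability --- the maximum does not dominate the tail by a growing factor. Since this paper only needs the theorem as a cited primitive, none of this affects the paper, but your sketch should not be mistaken for a proof of the sampler itself.
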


\begin{algorithm}
\caption{A streaming algorithm for $MIF(n)$ with constant error on any input stream}\label{alg:mif-static}
\begin{algorithmic}
\State \textbf{Parameters:} $\delta_2$ 
 \State \textbf{Initialization:}
 \State Let $\A$ be $L_1$-sampler of Theorem \ref{thm: perfect-l1-sampler} with $\epsilon = v = \delta_1 =1/4, \delta_2$.
 \State Feed the updates $(i,-1)$ for $i = 1, \dots, n$ to $\A$. \\ 
 \State \textbf{Update}($i \in [n]$):
 \State Feed the update $(i,1)$ to $\A$ \\

 \State  \textbf{Query:}
  \If {$\A$ outputs some $i \in [n]$ with negative $\Tilde{f}$}
  \Comment{$\Tilde{f}$ is  the estimation of $f_i$}
  \State output $i$
 \Else 
 \State output FAIL
 \EndIf
 
 \end{algorithmic}
\end{algorithm}

Now, we are ready to show an upper-bound on the space complexity of $MIF(n,\ell)$ with $
\ell<n$:  
\begin{theorem} \label{thm:mif-static}
For any $\delta > 0$ there is a $O(\log^2 n\log\log n \log^2(1/\delta))$ space one-pass algorithm which, given a stream
of length $\ell$ over the universe $[n]$ with $\ell<n$, outputs missing item $i \in [n]$ with probability at least $1-\delta$.
\end{theorem}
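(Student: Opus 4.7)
The plan is to analyze Algorithm~\ref{alg:mif-static}, run in $T = \Theta(\log(1/\delta))$ parallel copies, by exploiting the following structural observation. Feeding the preloading updates $(i,-1)$ for $i=1,\dots,n$ before any stream items leaves the sampler $\A$ with an effective frequency vector $f$ satisfying $f_i = c_i-1$, where $c_i \geq 0$ is the multiplicity of $i$ in the stream. Consequently $f_i = -1$ iff $i$ is missing from the stream, while $f_i \geq 0$ otherwise, and the sign of the estimate $\tilde f_i$ that $\A$ reports is a certificate of ``missing-ness.''

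The heart of the argument is bounding the $L_1$ mass on missing items from below. Let $m \geq n-\ell$ denote the number of missing items and $r := \ell-(n-m) = \ell-n+m \geq 0$ the total number of ``extra'' occurrences from repeated stream items. Then $\norm{f}_1 = m + r = 2m+\ell-n$, and the missing items contribute mass exactly $m$. The hypothesis $\ell < n$ forces $r \leq m$, so $m/\norm{f}_1 \geq 1/2$. Plugging $v = \delta_1 = 1/4$ into Theorem~\ref{thm: perfect-l1-sampler} thus shows that a single invocation of $\A$ returns \emph{some} missing item with probability at least $(3/4)(1/2) - O(n^{-c+1}) \geq 1/4$. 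Conditional on an output $i$, the guarantee $\tilde f_i = (1\pm 1/4)f_i$ (holding with probability $1-\delta_2$) preserves signs: it is strictly negative when $f_i = -1$ and non-negative when $f_i \geq 0$. Hence the algorithm emits a valid missing item whenever $\A$ samples a missing item and its estimate is accurate, and otherwise emits FAIL — no false positives.

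Running $T = \Theta(\log(1/\delta))$ independent copies in parallel and outputting the first non-FAIL answer drives the probability of all copies failing to return a missing item below $\delta/2$, and a union bound with $\delta_2 = \delta/(2T)$ controls the remaining sign-estimation errors by another $\delta/2$. With $\epsilon = 1/4$ the parameter $\beta$ in Theorem~\ref{thm: perfect-l1-sampler} is $O(1)$, so each sampler uses $O\bigl(\log^2 n (\log\log n)^2 + \log n\log(1/\delta)\bigr)$ bits and the $T$-fold parallelism yields the claimed space bound. The only genuinely delicate step is the inequality $m/\norm{f}_1 \geq 1/2$, which uses the hypothesis $\ell < n$ in an essential way: any stream of length $\ell \geq n$ can make this ratio arbitrarily small (e.g.\ by repeating a single element many times), which is precisely why the long-stream regime of Section~\ref{subsec:mif-long-regime} requires a fundamentally different approach.
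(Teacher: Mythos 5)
Your proposal is correct and follows essentially the same route as the paper: the same preloading of $(i,-1)$ updates, the same key observation that missing items carry at least half the $L_1$ mass when $\ell<n$ (you argue it by counting $m$ and $r$, the paper by noting $\sum_i f_i<0$), and the same amplification with $\delta_2$ scaled as $\delta/\Theta(\log(1/\delta))$ plus a union bound on sign-preserving estimates. Your accounting of the sampler's additive $O(n^{-c})$ error and of the $(\log\log n)^2$ factor is if anything slightly more explicit than the paper's.
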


\begin{proof}
Let $\delta > 0$ and let $\B$ be the algorithm that repeats algorithm \ref{alg:mif-static} for $O\left(\log(1/\delta)\right)$ times in parallel and reports the first non-failing output. 

First, we argue that Algorithm \ref{alg:mif-static} outputs FAIL with constant probability (since $\delta_1 = 1/4$). Second, given that the algorithm outputs $i$ with negative $f_i$, then $i$ is an item that has not appeared in the stream. Indeed, in the initialization phase, we subtract $1$ from each coordinate of $f$. When a stream item $i \in [n]$ comes, we increase $f_i$ by $1$. Therefore, we have that $f_i = -1$ for items that have not appeared in the stream, $f_i = 0$ for items that appeared once, and $f_i > 1$ for items that appeared at least twice. At the end of the initialization phase, we have that $\sum_{i=1}^{n} f_i = -n$.

Since the length of the stream $\ell$ is smaller than $n$, we have that at any point in the stream $\sum_{i=1}^{n} f_i < 0$. Therefore, a perfect $L_1$ sampler for $f$ outputs $i$ such that $f_i$ is negative with probability of at least $1/2$. Moreover, since $v=1/4$ (the parameter that effects the estimation of $\A$ - the $L_1$ sampler from Thm. \ref{thm: perfect-l1-sampler}), the probability that $\A$ outputs $i$ such that $f_i$ is negative, given that $\A$ not outputs FAIL, is at least $1/3$. To conclude, the overall Algorithm \ref{alg:mif-static} outputs $i$ such that $f_i$ is negative with probability at least $\frac{1}{3}\cdot\frac{3}{4} = \frac{1}{4}$.

To amplify the probability to $1-\delta$, Algorithm $\B$ repeats this process $O\left(\log(1/\delta)\right)$ times in parallel and report the first item $i$ with negative $f_i$. However, we have just an estimation for $f_i$ and hence need to ensure (with high probability) that the estimation $\Tilde{f}$ has the same sign as $f_i$ throughout the amplification process. Choosing $\delta_2 = O\left(\delta / \log(\delta^{-1})\right)$, we have by the union bound that $\Tilde{f}\cdot f_i > 0$ for all the $O\left(\log(1/\delta)\right)$ instances of the algorithm with probability at least $1-\delta/2$. Overall, algorithm $\B$ solves $MIF(n,\ell)$ for $\ell < n$ with probability $1-\delta$. Since the space complexity of algorithm \ref{alg:mif-static} with $\delta_2 = O\left(\delta / \log(\delta^{-1})\right)$ is $O\left( \log^2 n\log\log n \log(1/\delta) \right)$ bits of memory, the result follows.

\end{proof}

\begin{remark}
    The $L_1$ sampler In Thm.~\ref{thm:mif-static} can be replaced with every $L_P$ sampler, where $0\leq p \leq 1$. To see this, need to observe that after the initialization phase $f_i = -1$ for any $i\in[n]$, and hence at the end of the stream $f_i=-1$ for every unseen item $i\in [n]$. Moreover, given stream $e_1,\dots, e_{\ell}$ if $A = \{i\in[n]: \forall j\in[\ell], i \neq e_{j}\}$ is the set of
    unseen items, and $B = \{i\in[n]: \exists j,j'\in[\ell], j\neq j'. i = e_{j}=e_{j'}\}$ is the set of items that have appeared at least twice in the stream, then we have
    \begin{align*}
        \sum_{i \in A} |f_i|^p = |A| \geq \sum_{i \in B} f_i \geq \sum_{i \in B} |f_i|^p,
    \end{align*}
    for $0\leq p \leq 1$. Therefore, a perfect $L_p$ sampler will sample an item $i$ from $A$ with a probability of at least half. The rest of the arguments are the same as in the proof.  
\end{remark}

\noindent
\textbf{Application to Card Guessing:}
Recall the Card Guessing game introduced in Section \ref{sec:related-problem}. Using the same algorithm as in Thm.~\ref{thm:mif-static} with $\delta = \delta'/n$, we can construct a Guesser that uses $O(\log(n/\delta')\log^2(n))$ memory bits that are expected to score $(1-\delta')\ln(n)$ correct guesses against any static Dealer (i.e.\ the deck is fixed in advanced). Although it does not improve \cite{menuhin2021keep} result, it shows the versatility of the $L_1$ sampler (note that this observation was made by Arnold Filtser (private communication) as well). 

\subsection{Missing Item problem In The Long Regime} \label{subsec:mif-long-regime}
After showing that the space complexity of $MIF(n,\ell)$ is $polylog(n)$, even when the length of the stream is close to $n$, it is natural to ask if it is possible to achieve a similar result when the total length of the stream is longer than the size of the domain (i.e. $\ell \geq n$). In this case, we say that algorithm $\A$ solves $MIF(n, \ell)$ with error $\delta$ if and only if $\A$ outputs a missing item $i \in [n]$ at the end of the stream (if such an item exists) with a probability of at least $1 - \delta$. If all the items have already appeared in the stream, we allow $\A$ to output any number. In the next part, we show that perhaps surprisingly, the answer is negative.

\begin{theorem} 
For any $\delta \in [0,1/4]$, the space complexity for an algorithm solving $MIF(n, \ell)$ with $\ell > n+k$ (where $k\leq n$ is a function of $n$) and error $\delta$ is at least $\Omega(k)$. 
\end{theorem}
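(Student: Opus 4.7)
The plan is to reduce the one-way INDEX communication problem on $k$-bit inputs to $MIF(n, n+k)$ using only a \emph{single} query to the MIF algorithm, which yields the lower bound $s = \Omega(k)$ directly without any amplification overhead. In this INDEX problem, Alice holds a string $x \in \{0,1\}^k$, Bob holds an index $j \in [k]$, and Bob must output $x_j$ after receiving one message from Alice. A standard Fano / mutual-information argument gives that the public-coin one-way communication complexity of this task is $\Omega(k)$ for any constant error bounded away from $1/2$, in particular for $\delta \leq 1/4$.

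Assume first that $n \geq 2k$. Given inputs $(x,j)$, construct the static stream $\sigma_{x,j}$ of length exactly $n+k$ over $[n]$ as follows. Alice's portion consists of each element of $\{2k+1,\ldots,n\}$ streamed once together with, for each $i \in [k]$, a single occurrence of the element $2i-1$ if $x_i = 0$ or $2i$ if $x_i = 1$; that is $n-k$ updates in total. Bob's extension then streams both $2i-1$ and $2i$ for every $i \in [k] \setminus \{j\}$ together with two extra copies of $2k+1$, contributing $2k$ more updates. The stream depends only on $(x,j)$ and not on the algorithm's internal state, so it is a legitimate static input to $MIF(n,n+k)$.

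A direct inspection shows that $\sigma_{x,j}$ has a \emph{unique} missing element, namely $2j$ if $x_j = 0$ and $2j-1$ if $x_j = 1$: every element of $\{2k+1,\ldots,n\}$ appears in Alice's prefix; both elements of the pair $\{2i-1, 2i\}$ appear for every $i \neq j$ via Bob's extension; and at position $j$ only the element encoding $x_j$ is streamed. The INDEX protocol is therefore: Alice and Bob share the algorithm's random tape as public coins; Alice runs the streaming algorithm on her prefix and sends the resulting working-memory state $M$ (of size at most $s$) to Bob; Bob resumes the simulation on his extension, reads the final output $o$, and declares $x_j = 0$ iff $o = 2j$. The protocol errs only when MIF errs on $\sigma_{x,j}$, so its error is at most $\delta \leq 1/4$, and the INDEX lower bound forces $s = \Omega(k)$. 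The case $k > n/2$ is handled identically with $k' = \lfloor n/2 \rfloor$ in place of $k$: the resulting $\Omega(k')$ lower bound for $MIF(n,n+k')$ transfers to $MIF(n,n+k)$ by padding the shorter stream with copies of any present element, and $k' \geq k/2$ then yields $\Omega(k') = \Omega(k)$. The same padding trick handles the case $\ell > n+k$ by extending $\sigma_{x,j}$ with extra copies of $2k+1$ until the stream has length $\ell$, which leaves the missing set unchanged.

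The main obstacle I expect is reconciling the randomness conventions so that the transmitted state really has at most $s$ bits. In the classical streaming model the random tape is typically not charged to the space cost, so if one insisted on treating it as private to the algorithm, Alice would need to transmit her random bits along with $M$ and the communication could far exceed $s$. I sidestep this by modelling the algorithm's coins as \emph{public} randomness shared between Alice and Bob, which is free in the public-coin communication model and still compatible with the $\Omega(k)$ INDEX lower bound. This is precisely what lets a single MIF query extract the full $\Omega(k)$ bound, bypassing the $\log(1/\delta)$ amplification overhead that a reduction from AVOID$(2k, k, k)$---which must extract $k$ distinct missing items via $k$ separate simulations sharing the same state---would necessarily incur.
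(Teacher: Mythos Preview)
Your proof is correct and takes a genuinely different route from the paper. The paper reduces from (two-way) set Disjointness on $[k]$: Alice streams $\overline{S_A}\cup[k+1,n]$, passes the state to Bob, who streams $\overline{S_B}$; the resulting stream of length at most $n+k$ misses exactly the elements of $S_A\cap S_B$, so a correct MIF output witnesses non-disjointness (Bob sends the output back to Alice for verification, adding only $O(\log k)$ bits). You instead reduce from one-way INDEX, engineering the stream so that it has a \emph{unique} missing element that directly encodes $x_j$, letting Bob finish with a single query and no return message. Both routes yield $\Omega(k)$; yours is slightly cleaner in that it stays purely one-way, while the paper's has the pleasant structural feature that the missing set \emph{is} the intersection. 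One tiny nit: in the boundary case $n=2k$ your padding element $2k{+}1$ falls outside $[n]$; simply pad with any element Bob has already inserted (e.g.\ some $2i$ with $i\neq j$) instead. Finally, your closing remark about an AVOID-based reduction incurring a $\log(1/\delta)$ overhead is off-target here: the paper uses Disjointness, not AVOID, for this theorem and likewise pays no such overhead.
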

\begin{proof}
We prove this by reducing the communication task of disjointness to $MIF(n, \ell)$ with $\ell \geq n+k$. Suppose that the two parties Alice and Bob have subsets $S_A \subseteq [k]$ and $S_B \subseteq [k]$ respectively and the question is whether $S_A \cap S_B =\emptyset$. They instantiate an instance $\X$ of the given algorithm for $MIF(n,\ell)$, then Alice runs it on $\comp{S_A}$ and $[k+1,n]$, where $\comp{S_A}$ is the complement of $S_A$ in $[k]$. Then she sends the state of $\X$ to Bob. Since this is a public coin protocol, all randomness can be shared for free. Then Bob runs $\X$ on $\comp{S_B}$. 
Overall, they run $\X$ on at most $|\comp{S_B}|+|\comp{S_A}| + |[k+1,n]| = n+k+1$ items. In the end, $\X$ outputs a candidate $x \in [n]$ for a missing item. If $x \in [k]$, then Bob sends $x$ to Alice, and both of them verify that $x$ belongs to their subset $S_A$ and $S_B$. If this is indeed the case, then Bob outputs $1$, which means that $S_A$ intersects with $S_B$. Otherwise, Bob outputs $0$,  which means that $S_A$ is disjoint to $S_B$.

Indeed, if $S_A \cap S_B \neq \emptyset$, then $\comp{S_A} \cup \comp{S_B} \cup [k+1,n] \neq [n]$, which means that with probability at least $1 - \delta$ we have that $x \not \in \comp{S_A} \cup \comp{S_B} \cup [k+1,n]$, which implies that $x \in S_A \cap S_B$. Therefore, with probability at least $1 - \delta$ the protocol is correct. Otherwise, $S_A \cap S_B = \emptyset$,  and hence for any $x \in [k]$, we have that $x \notin S_A \cap S_B$, which implies that the protocol is correct with probability 1. Overall, this protocol uses at most $s + (\log k) + 2$ bits, where $s$ is the space of $\X$.  Combine with Thm.~\ref{thm-cc-disjointness-lower-bound}, the result follows.
\end{proof}

We now show that the bound above is nearly tight by using the same ideas of Thm.~\ref{thm:mif-static}, with a slightly different analysis. 
\begin{theorem} \label{thm:mif-static}
For any $\delta > 0$ there is a $O(k\log^2 n\log\log n\log(1/\delta))$ space one-pass algorithm which, given a stream
of length $\ell$ over the universe $[n]$ with $\ell=n+k$, outputs a missing item $i \in [n]$ with probability at least $1-\delta$.
\end{theorem}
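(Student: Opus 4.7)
The plan is to reuse Algorithm~\ref{alg:mif-static} essentially verbatim and obtain the $O(k)$ blowup purely at the amplification stage: run $N=\Theta(k\log(1/\delta))$ independent copies of the perfect $L_1$ sampler in parallel and output the first instance whose estimate $\widetilde{f}$ is strictly negative. The only structural change versus Theorem~\ref{thm:mif-static} is a refined counting of the $L_1$ mass that accounts for the extra $k$ appearances.

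After the initialization phase (feeding $(i,-1)$ for every $i\in[n]$) and processing a length-$\ell=n+k$ stream, the frequency vector satisfies $f_i=-1$ for every missing item and $f_i=t_i-1\geq 0$ for each item with $t_i\geq 1$ occurrences. Let $m_0$ denote the number of missing items; if $m_0=0$ any output is acceptable, so I assume $m_0\geq 1$. Since $\sum_i f_i=\ell-n=k$, the sum of the positive coordinates equals $m_0+k$, and consequently $\norm{f}_1=2m_0+k$. One call to the perfect $L_1$ sampler of Theorem~\ref{thm: perfect-l1-sampler} (with $\delta_1=v=\epsilon=1/4$) therefore returns a negative coordinate with probability at least $\Omega\!\bigl(m_0/(2m_0+k)\bigr)=\Omega(1/k)$, conditioned on not failing. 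Setting $N=\Theta(k\log(1/\delta))$ drives the probability that \emph{every} parallel copy misses to at most $\delta/2$.

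To guarantee that the sign of $\widetilde{f}_i$ matches the sign of $f_i$ throughout the amplification, I choose $\delta_2=\Theta(\delta/N)$, so that a union bound keeps all $N$ estimates faithful with probability at least $1-\delta/2$. Combining the two bad events by a union bound, Algorithm~$\B$ reports a true missing item with probability at least $1-\delta$. For the space accounting, each of the $N$ copies consumes $O(\log^2 n\log\log n\log(1/\delta_2))=O(\log^2 n\log\log n\log(n/\delta))$ bits by Theorem~\ref{thm: perfect-l1-sampler}, and since $k\leq n$ we may absorb $\log(n/\delta)$ into $\log(1/\delta)$ together with the $\log n$ factors already present. Multiplying by $N=\Theta(k\log(1/\delta))$ gives the claimed $O(k\log^2 n\log\log n\log(1/\delta))$ total.

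The only non-routine step is the mass-counting lemma $\norm{f}_1=2m_0+k$, which is what forces (and justifies) the $O(k)$ blowup in the number of parallel copies; without the $(-1)$ initialization the $L_1$ mass on missing items could be swamped by the stream, and the sampler would almost surely never see them. Everything else is a direct reuse of the previous theorem's scheme, so no new ideas are needed beyond identifying this tight ratio.
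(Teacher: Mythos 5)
Your proposal is correct and follows essentially the same route as the paper: reuse Algorithm~\ref{alg:mif-static}, note that after the $(i,-1)$ initialization the missing items are exactly the negative coordinates while $\sum_i f_i = k$, so the $L_1$ sampler hits a missing item with probability $\Omega(1/k)$, and amplify with $\Theta(k\log(1/\delta))$ parallel copies and $\delta_2 = \Theta(\delta/N)$ for sign fidelity. Your explicit mass count $\norm{f}_1 = 2m_0+k$ is in fact slightly more careful than the paper's terse ``at least $1/k$'' claim, but it is the same argument.
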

\begin{proof}
As in the proof of Thm.~\ref{alg:mif-static}, at the end of the stream we have that $f_i = -1$ for items that don't appear in the stream, $f_i = 0$ for items that appear once, and $f_i > 1$ for items that appear at least twice.
At the end of the initialization phase we have that $\sum_{i=1}^n f_i = -n$, which implies after $n+k$ stream items that $\sum_{i=1}^n f_i = k$.

Assuming that there is indeed a missing item in the stream, 
then a perfect $L_1$ sampler for $f$ outputs $i$ such that $f_i$ is negative with a probability of at least $1/k$.
Hence, by repeating algorithm \ref{alg:mif-static} with $\delta_2 = 0(\delta / k\log(1/\delta))$ for $O(k\log(1/\delta))$  times in parallel and accepting the first non-failing output, the results follows.
\end{proof}

\section{Lower Bounds for Pseudo-deterministic and Random Start Models} \label{subsec:pseudo-deterministic-random- start}

This section studies the $MIF(n,\ell)$ in an adversarial setting, where the adversary might choose the stream elements based on the past streaming algorithm's outputs.
We start by showing a tight lower bound of $\Tilde{\Theta(\ell)}$ (up to $polylog(n)$ factors) for a pseudo-deterministic error-zero algorithm. We note that Stoeckl showed a tight bound of $O(\ell)$ (up to $polylog(n)$ factors) for a deterministic algorithm and that the next lower bound is based on some of his ideas. In more detail, given partial stream $\sigma \in [n]^*$, let $F(\sigma)$ be the set of possible outputs of the deterministic algorithm when $\sigma$ extends to a full stream. The key idea in Stoeckl's proof is to show that \\ \\
\textbf{Key Idea:} Exists partial stream $\sigma^* \in [n]^t$ such that for every partial stream $a \in [n]^k$ (for the right choice of $k$ and $t$) we have that $|F(\sigma^*a)|$ is large compare to $|F(\sigma^*a)|$ (say  $|F(\sigma^*a)| \geq |F(\sigma^*)|/2$). \\ \\ 

Such a partial stream exists by a recursive method that starts with the empty stream $\sigma_0$. If the empty stream has the above property, we are done. Otherwise, exists partial stream $\sigma_1 \in [n]^k$ such that $|F(\sigma_0\sigma_1)| < |F(\sigma_0)|/2 \leq n/2$. If $\sigma_0\sigma_1$ has that property, we are done. Otherwise, continue in that process and after $\log(n)$ times, this process must end, otherwise $|F(\sigma_0\sigma_1\dots\sigma_n)| \leq 0$, which is a contradiction to the correctness of the deterministic algorithm. 

Then, using the deterministic algorithm for MIF we can construct an algorithm for avoid($t = F(\sigma^*)|, a=k , b = F(\sigma^*)|/2$): First feed the deterministic algorithm with $\sigma^*$, then with $S_A$, which is Alice's input at size $k$ in the avoid problem. By the property of $\sigma^*$, we have that the set of the possible outputs of the algorithm when we extend $\sigma^*S_a$ to a full stream is large (at least $F(\sigma^*)|/2$), and all these possible outputs are not in $S_A \cup \sigma^*$. Since the algorithm for MIF is deterministic, we can run it on every possible extension of $\sigma^*S_A$ to a full stream to reconstruct a large set $S_B$ which is disjoint from $S_A$. Therefore, a deterministic lower bound for the avoid problem can be translated to a deterministic lower bound for MIF problem. 

To adapt these ideas to a lower bound for a pseudo-deterministic algorithm, we first need to define $F(\sigma)$ in a different way, since $F(\sigma)$ as defined above is not necessarily disjoint from the elements in $\sigma$. The new definition of $F(\sigma)$ is based on the outputs with a high probability of a pseudo-deterministic algorithm. Then, we show the existence of a partial stream $\sigma^* \in [n]^*$ with the same property as in the key idea of Stoeckl. To reduce the avoid problem to the MIF problem, we used the assumption of the zero error model, which implies that after we run the algorithm on any partial stream that extends $\sigma^*S_A$ (where $S_A$ is the input of Alice in the avoid problem) to a full stream, the set of items we get is disjoint from $\sigma^* \cup S_A$. Ultimately, we carefully use a union bound argument to show that this set is large enough. Formally, we have the next Theorem:

\begin{theorem} \label{thm:pseudo-det-lower-bound}
Every pseudo-deterministic (Definition~\ref{def:pseudo-deterministic}) error-zero (Definition~\ref{def:zero-error}) streaming algorithm for $MIF(n,\ell)$ with error at most $\delta \leq 1/3$ requires $\Omega(\ell / \log^2n+\log(1-\delta)/\log n)$ bits of space.
\end{theorem}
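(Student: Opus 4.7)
The plan is to mirror Stoeckl's deterministic $\Omega(\ell)$ lower bound (via a reduction from the avoid communication game) but to replace his ``set of possible outputs of a deterministic algorithm'' by the set of \emph{canonical} outputs of a suitably amplified pseudo-deterministic algorithm, and to use the error-zero guarantee to make the reduction go through. Concretely, given a pseudo-deterministic algorithm $A$ of space $s$ with error $\delta\le 1/3$, I first amplify it by running $c=\Theta(\log n)$ independent copies in parallel and taking a plurality vote among the non-FAIL outputs; call the resulting algorithm $A'$. Then for every stream $\sigma$, the canonical value $e_\sigma$ is produced by $A'$ except with probability at most $1/(100 n^2)$, at a cost of $O(s\log n)$ space. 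Define
\[ F(\sigma) \;=\; \bigl\{\, e^*_{\sigma\tau} : \tau \in [n]^{\ell - |\sigma|}\,\bigr\}, \]
where $e^*_{\sigma\tau}$ is the canonical output of $A'$ on the full stream $\sigma\tau$. Because $A$ (and hence $A'$) is error-zero, every canonical output is a genuine missing item, so $F(\sigma)$ is automatically disjoint from the items appearing in $\sigma$.

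Next I would replay Stoeckl's recursive construction of a hard prefix $\sigma^*$. Fix a block length $k = \Theta(\ell/\log n)$. Starting from $\sigma_0=\emptyset$ with $|F(\sigma_0)|\le n$, at step $j$ search for a block $a_j\in[n]^k$ with $|F(\sigma_j a_j)|<|F(\sigma_j)|/2$; if such an $a_j$ exists set $\sigma_{j+1}=\sigma_j a_j$, otherwise halt and put $\sigma^*:=\sigma_j$. The process must halt within $\log_2 n$ rounds (else $|F|$ would drop below $1$), so $|\sigma^*|=O(k\log n)\le \ell-k$. By choosing the constant inside $k$ sufficiently small one also ensures $|F(\sigma^*)|\ge 2k$ at the halting time, and by construction every length-$k$ continuation satisfies $|F(\sigma^* a)|\ge |F(\sigma^*)|/2$.

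Finally I would instantiate the avoid$(t,a,b)$ reduction with $t=|F(\sigma^*)|$, $a=k$, $b=|F(\sigma^*)|/2$, working inside the universe $F(\sigma^*)\subseteq[n]$ (both players know $\sigma^*$ non-uniformly, since it depends only on $A$ and $\ell$). On input $S_A\subseteq F(\sigma^*)$ with $|S_A|=k$, Alice feeds $\sigma^* S_A$ into $A'$ and transmits its $O(s\log n)$-bit state; this is the entire communication. Using shared public coins (for $A'$'s randomness) and unbounded offline computation, Bob identifies $b$ continuations $\tau_1,\ldots,\tau_b$ whose canonical outputs $e^*_{\sigma^* S_A \tau_i}$ are pairwise distinct (they exist because the halving property gives $|F(\sigma^* S_A)|\ge b$), runs $A'$ forward from the received state on each $\tau_i$, and reports the obtained answers as $S_B$. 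A union bound over the $b\le n$ chosen extensions ensures all $b$ outputs are canonical (hence distinct) with probability at least $2/3$, and the error-zero property forces each one to be a missing item of the corresponding full stream, so $S_B\subseteq F(\sigma^*)\setminus S_A$. Applying Lemma~\ref{lemma:avoid-lower-bound} yields $O(s\log n)\ge kb/(t\ln 2)+\log(1-\delta)=\Omega(k)+\log(1-\delta)$, which rearranges to the claimed $\Omega(\ell/\log^2 n + \log(1-\delta)/\log n)$. The main obstacle is ensuring that Bob can isolate $b$ distinguishing extensions without paying a union bound over all $n^\ell$ possible continuations; this is what forces amplification to $1/\mathrm{poly}(n)$ error and accounts for one of the two logarithmic factors lost compared to Stoeckl's deterministic bound.
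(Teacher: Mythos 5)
Your proposal follows essentially the same route as the paper's proof: amplify the pseudo-deterministic zero-error algorithm with $\Theta(\log n)$ parallel copies, define $F(\sigma)$ as the canonical outputs over all length-$\ell$ extensions, build the prefix $\sigma^*$ by the recursive halving argument with block length $k=\Theta(\ell/\log n)$, and reduce $avoid(|F(\sigma^*)|,k,|F(\sigma^*)|/2)$ to the streaming problem via witness extensions, a union bound over at most $n$ of them, and the zero-error guarantee for disjointness, finishing with Lemma~\ref{lemma:avoid-lower-bound}. The one spot to tighten is your claim that $|F(\sigma^*)|\ge 2k$ ``by choosing the constant inside $k$ sufficiently small'': halting alone does not control $|F(\sigma^*)|$, but the bound does follow from the zero-error/pseudo-deterministic correctness argument (any prefix $\sigma$ must satisfy $|F(\sigma)|>\ell-|\sigma|$, since otherwise one could append all of $F(\sigma)$ and pad to length $\ell$), which is exactly how the paper caps the recursion, so the proof goes through.
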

 \begin{proof}
 We prove this by reducing the communication task of $avoid$ to the missing item problem. Let $\A'$ be a pseudo-deterministic zero error streaming algorithm for $MIF(n,\ell)$ with error $\delta \leq 1/3$. 
 Recall that by the definition of pseudo-determinism, for every stream $\sigma = e_1,\dots,e_{\ell} \in [n]^\ell$, there exists an item $e_\sigma \in [n]$ such that
 \begin{align*}
     \Pr_r [\A(\sigma,r) = e_\sigma] \geq 2/3
 \end{align*}
Since the probability of failure $\delta \leq 1/3$, we have that $e_\sigma \neq e_i$ for any $i \in [\ell]$.
If the space complexity of $\A$ is $s$, then  by repeating the algorithm $\Theta(\log n)$
times in parallel and outputting the first non-failing output, we get an algorithm $\A$ with $\Theta(s \cdot \log n)$ bits of space that has
\begin{align} \label{eq:amlify-zero-error-pseudo-deterministic} 
 \Pr_r [\A(\sigma,r) = e_\sigma] \geq 1-\frac{\delta}{n} 
\end{align}
For any partial stream $\sigma \in [n]^{\ell'}$ with $\ell' \leq \ell$, define
\begin{align*}
    \F(\sigma) = \{e_{\sigma\sigma'} : \sigma' \in [n]^{\ell -\ell'}  \}
\end{align*} 
to be the set of possible outputs of $\A$ when $\sigma$ extends to a full stream (i.e. to a stream of length $\ell$). Let $k = \ceil{\ell / 2\log n}$.
We argue that there exists a partial stream $\sigma^* \in [n]^*$, such that for every partial stream $a \in [n]^k$, we have that

\begin{align}\label{eq:f-sigma-possible-outputs-remains-high}
    \left|\F(\sigma^* a)\right| \geq \frac{1}{2}\left|\F(\sigma^*) \right|.
\end{align}

To see this, we will construct a (partial) stream $\sigma^*$ with this property. Denote the empty stream by $\sigma^0$. In step number $i$ (starting with $i=0$), if there exists a string $a \in [n]^k$ such that $\left|\F(\sigma^ia)\right| \leq \frac{1}{2}\left|\F(\sigma^i) \right|$, then let $\sigma^{i+1} = \sigma^i a$. Observe that until step $i = \ceil{\log n}$, we have that $|\sigma^i| = ki \leq \ell / 2$, and thus $\F(\sigma^i)$ is well defined. 
 Moreover, since $\left|\F(\sigma^0)\right| \leq n$, the above process must end after at most $i_{max}:=\ceil{\log(n/\ell)+2}$ steps. Otherwise $|\F(\sigma^{i_{max}})| \leq \ell/4$, which implies that $\A$ cannot output a correct answer on the stream $\sigma^{i_{max}}\sigma'$, where $\sigma'$ is a partial stream that contains all the elements of $\F(\sigma^{i_{max}})$ and extends $\sigma^{i_{max}}$ to a full stream of length $\ell$ (i.e. $|\sigma\sigma'| = \ell$). We conclude that  there exists a stream  $\sigma^* \in [n]^*$ satisfying Eq.~(\ref{eq:f-sigma-possible-outputs-remains-high}) for every $a\in [n]^k$.

We are now ready to construct a random public coins protocol for  $avoid(|\F(\sigma^*)|,k, \frac{1}{2}|\F(\sigma^*)|)$.
Suppose that Alice has as input a subset  $S_A \subseteq \F(\sigma^*)$ with $|S_A| = k$ and Bob needs to produce a set $S_B \subseteq \F(\sigma^*)$ with $|S_B| = \frac{1}{2}|\F(\sigma^*)|$ such that $S_A \cap S_B = \emptyset$. They instantiate an instance $\A$ of the given pseudo-deterministic zero error algorithm for $MIF(n,\ell)$. Then Alice runs it on $S_A$ and sends the state of $\X$ to Bob. Since this is a public coins
protocol, all randomness $r$ can be shared for free. Then Bob runs $\A$ on every possible partial stream $\sigma'\in [n]^{\ell -k - \ell^*}$, where $\ell^* = |\sigma^*|$ and reports $S_B$ as the first $\frac{1}{2}|\F(\sigma)|$ items that $\A$ outputs (if there are sufficiently many   such elements).  Let $a$ be the partial stream that consists of all the items in $S_A$ in some arbitrary order, and let
\begin{align}
    G_r(a) = \{e \in [n] : A(\sigma^*a\sigma',r) = e, \sigma' \in [n]^{\ell -k-\ell'}\}
\end{align}
be the set of all the outputs of $\A$ on all extensions of $\sigma^*a$ to a full stream, given randomness $r$. Since $\A$ is a zero error algorithm, we have that $G_r(a) \subseteq \F(\sigma^*a)$ for every possible randomness $r$. Moreover, if $\F(\sigma^*a) = \{e_1,\dots,e_t\}$, for $1/2|\F(\sigma^*)| \leq t \leq n$, then for any $e_i\in \F(\sigma^*a)$ there exists some $\sigma_i \in [n]^{\ell - k - \ell'}$  such that $e_i$ is the unique possible output of $\A$ on the stream $\sigma^*a\sigma_i$. By Eq.~(\ref{eq:amlify-zero-error-pseudo-deterministic}) and the union bound, we have that 
\begin{align*}
   Pr_r[\forall i\in [t], \ \A(\sigma^*a\sigma_i) = e_i] \geq 1-\delta.
\end{align*}
Namely, $G_r(a) \supseteq \F(\sigma^*a)$ with probability of at least $1-\delta$, which implies a protocol that solves $$avoid(|\F(\sigma^*)|,k, \frac{1}{2}|\F(\sigma^*)|)$$ with probability of at least $1-\delta$.

Observe that this protocol uses $\Theta(s \log n)$ bits of memory, where $s$, as defined at the beginning of the proof,  is the space complexity of $\A'$. By Lemma~\ref{lemma:avoid-lower-bound}, we have that 
\begin{align*}
    \Theta(s \log n) \geq \frac{k}{2log(2)} + \log(1-\delta) \geq \frac{\ell}{4\log n} + \log(1-\delta),
\end{align*} 
from which the result follows. 
 \end{proof}

Observe that this lower bound is tight up to $polylog(n)$,  since a deterministic algorithm that remembers all the unused items from $[\ell+1]$ can be implemented using $\ell+1$ bits of memory.

We now consider the random start model, where the random bits used by the streaming algorithm are included in the space (memory) cost.
\cite{stoeckl2023streaming} showed a {\em conditional} lower bound on the space complexity of a random start algorithm for the missing item problem, which depended on the space needed for a pseudo-deterministic algorithm to solve that problem. Using almost identical proof, we show that such a conditional lower-bound also holds in the zero error case. Combining with our lower bound for pseudo-deterministic error-zero (Thm.~\ref{thm:pseudo-det-lower-bound}), we have the next result which is given mainly for completness:  
 \begin{theorem} \label{thm:lower-bound-random-start}
Every random start (Definition \ref{def:random-start}) error zero (Definition \ref{def:zero-error}) streaming algorithm for $MIF(n,\ell)$ against adaptive adversaries, with error at most $\delta \leq 1/3$ requires $\Omega(\sqrt{\ell}/\log n)$ bits of space.
 \end{theorem}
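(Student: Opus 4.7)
The plan is to reproduce Stoeckl's conditional reduction from random-start to pseudo-deterministic algorithms, check that it carries through under the error-zero assumption, and then invoke Theorem~\ref{thm:pseudo-det-lower-bound} to turn the now-unconditional pseudo-deterministic lower bound into an unconditional random-start lower bound. Concretely, suppose for contradiction that $\A$ is a random-start error-zero streaming algorithm for $MIF(n,\ell)$ using $s$ bits against adaptive adversaries with error at most $\delta \leq 1/3$. I will build from $\A$ a pseudo-deterministic error-zero algorithm $\B$ for the same problem using only $\tilde{O}(s^2)$ bits, and then apply Theorem~\ref{thm:pseudo-det-lower-bound} to conclude $s^2 \cdot \mathrm{polylog}(n) \geq \Omega(\ell/\log^2 n)$, i.e.\ $s \geq \Omega(\sqrt{\ell}/\log n)$.

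The construction of $\B$ is to sample $t = \Theta(s)$ independent random seeds $r_1,\dots,r_t \in \{0,1\}^s$ at initialization and run copies $\A_{r_1},\dots,\A_{r_t}$ of $\A$ in parallel on the incoming stream, using $O(s^2)$ total bits. On a query, $\B$ applies a canonical deterministic rule to the $t$ outputs (for example, the most frequent non-FAIL output, breaking ties lexicographically) and reports the result, or FAIL if every copy returned FAIL. That $\B$ is error-zero is immediate: any item it reports is output by some $\A_{r_i}$, and by assumption every non-FAIL output of every $\A_{r_i}$ is a genuine missing item. The non-trivial step is verifying pseudo-determinism, namely that for every stream $\sigma \in [n]^\ell$ there is a single $e_\sigma \in [n]$ with $\Pr[\B(\sigma) = e_\sigma] \geq 2/3$. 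Following Stoeckl, I would argue that in the random-start model the output distribution of $\A$ on any fixed stream must already be concentrated on a small set of heavy-hitters; otherwise an adaptive adversary could effectively decode the random seed from observed outputs within $\tilde{O}(s)$ adversarial queries and then force a wrong answer, contradicting robustness. Given such concentration, $t = \Theta(s)$ parallel copies identify the unique heaviest answer with probability $\geq 2/3$ by a Chernoff-type bound, which makes $\B$ pseudo-deterministic.

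The main obstacle is the concentration/heavy-hitter claim under the \emph{zero-error} restriction. Stoeckl's original conditional reduction tolerates two-sided error, but here the canonical tie-breaking rule must never produce an incorrect item, and the same adversarial-seed-recovery argument must still be valid even though different copies of $\A$ may report different, but equally correct, missing items. I would need to check that the structure of an error-zero random-start algorithm, whose output distribution on a given stream can spread over many valid missing items without violating correctness, does not break the counting argument that forces heavy-hitters. Once that step is in place, substituting the space bound $\tilde{O}(s^2)$ of $\B$ into Theorem~\ref{thm:pseudo-det-lower-bound} yields the claimed $\Omega(\sqrt{\ell}/\log n)$ lower bound on $s$.
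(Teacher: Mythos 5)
There is a genuine gap, and it sits exactly where you flagged it: the heavy-hitter/concentration claim is not something you can get from adversarial robustness for MIF, because MIF is a search problem with many simultaneously valid answers. A random-start error-zero algorithm may deliberately spread its output on a fixed stream over many different correct missing items (for instance, an algorithm that stores a random reserve of items from $[n]$ and reports any stored item not yet seen --- essentially Stoeckl's own upper-bound construction); an adaptive adversary cannot ``force a wrong answer'' from this spread, since every reported item remains correct, so no counting or seed-decoding argument forces the output distribution to concentrate on a single value. Consequently your algorithm $\B$, which runs $\Theta(s)$ independent seeds and reports the plurality answer, need not be pseudo-deterministic: over fresh choices of the seeds the mode itself fluctuates among the many valid answers (already with two answers of probability $1/2$ each, no single $e_\sigma$ is output with probability $2/3$), so the hypothesis of Theorem~\ref{thm:pseudo-det-lower-bound} is not met and the chain of inequalities $s^2\,\mathrm{polylog}(n) \geq \Omega(\ell/\log^2 n)$ is unsupported.

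The paper's reduction is structurally different and avoids this issue: it does not inflate the space, and it does not try to make the algorithm pseudo-deterministic on streams of the full length $\ell$. Instead it exploits that a random-start algorithm has at most $2^s$ possible seeds. By a chain-rule argument, for all but a $1/4$-fraction (in probability mass) of the seeds there is a prefix $\sigma^*_r$ of length at most $\ell-k$ after which the next block of $k=\lceil \ell/(2s+3)\rceil$ outputs is predictable with probability at least $2/3$; otherwise the seed's probability would be at most $(2/3)^{2s+4}\leq 2^{-(s+2)}$, and $2^s$ such seeds carry total mass at most $1/4$. Feeding $\sigma^*_r$ first and then the real input yields a pseudo-deterministic error-zero algorithm for $MIF(n,\lceil\ell/(2s+3)\rceil)$ using the \emph{same} $s$ bits, and applying Theorem~\ref{thm:pseudo-det-lower-bound} at this shorter length gives $s\geq c\ell/(s\log^2 n)$, hence $s=\Omega(\sqrt{\ell}/\log n)$. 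The quadratic loss thus comes from shrinking the stream length by a factor of roughly $s$, not from squaring the space; if you want to salvage your write-up, you should replace the parallel-repetition step with this seed-counting/predictable-prefix argument.
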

 \begin{proof}
Let $\A$ be a random start error zero streaming algorithm for $MIF(n,\ell)$ that uses $s$ bits of space. Using $\A$, we will construct a pseudo-deterministic error zero streaming algorithm for $MIF(n,\ell/2s)$ that uses the same space as $\A$. 

Given a partial stream $\sigma \in [n]^*$, let $\A(\sigma,r)$ be the sequence of $|\sigma|$ outputs made by $\A$ given input $\sigma$ and randomness $r$. Let $k = \ceil{\ell / (2s+3)}$.
We argue that with high probability there exists a partial stream $\sigma_{r}^* \in [n]^*$ (where $r$ is the randomness of $A$) with $|\sigma_{r}^* \in [n]^*| \leq \ell - k$, that for every partial stream $a \in [n]^k$ there exists a corresponding output $b \in [n]^k$ such that
\begin{align} \label{eq:random-start-easy-to-predict}
    \Pr_r [A(\sigma_{r}^*a,r) = w b : A(\sigma_{r}^*,r) = w] \geq \frac{2}{3}.
\end{align}
We will show how to construct $\sigma_{r}^*$ satisfying this property. Denote the empty stream by $\sigma^0$. In step number $i$ (starting with $i=0$), if there exists $a \in [n]^k$ such that for every $b \in [n]^k$ we have that
\begin{align}\label{eq:random-start-hard-to-predict}
    \Pr_r [A(\sigma^ia,r) = w b : A(\sigma^i,r) = w] \leq \frac{2}{3},
\end{align}
then let $\sigma^{i+1} = \sigma^ia$. Otherwise, $\sigma_{r}^* = \sigma^i$ satisfies Eq. \ref{eq:random-start-easy-to-predict}. Assume that this construction doesn't end until turn $t := 2s+4$. Let $a_1,\dots,a_{t} \in [n]^k$ be the partial streams that satisfies Eq. \ref{eq:random-start-hard-to-predict}, and let $b_1, \dots, b_t \in [n]^t$ be the corresponding outputs of the algorithm. By
applying Eq. \ref{eq:random-start-hard-to-predict} repeatedly, we have that
\begin{align*}
    \Pr_r [A(a_1\dots a_t,r) = b_1\dots b_t] \nonumber &= \\ 
    &\Pr_r [A(a_1,r) = b_1] \cdot \Pr_r [A(a_1a_2,r) = b_1b_2: A(a_1,r) = b_1] \cdot \nonumber\\ 
    &\dots \nonumber\\
    &\cdot \Pr_r [A(a_1\dots a_t,r) = b_1\dots b_t : A(a_1\dots a_{t-1},r) = b_1\dots b_{t-1}] \nonumber\\
    &\leq \left(\frac{2}{3}\right)^t = \left(\frac{2}{3}\right)^{2s+4} \leq \left(\frac{1}{2}\right)^{s+2} = 2^{-(s+2)}. \label{eq:pseudo-deterministic-small-probability-that-construction-doesn't-end}
\end{align*}
Therefore, if $r_A$ is the actual random bits of $\A$, then $\Pr_r[r=r_A] \leq 2^{-(s+2)}$. Since the space of $\A$ is $s$ includes its random bits, we have that $|R| \leq 2^s$, where $R$ is the set of all possible random bits of $\A$. Let \begin{align*} 
    R_{low} = \left\{r' \in R : \Pr_r [r=r'] \leq 2^{-(s+2)}\right\} 
\end{align*}
be the set of all the random bits with a small probability. We have that  
\begin{align} 
    \Pr_r[r \in R_{low}] = \sum_{r' \in R_{low}} \Pr_r [r = r'] \leq |R_{low}| \cdot 2^{-(s+2)} \leq 2^s \cdot 2^{-(s+2)} = \frac{1}{4} .
\end{align}
Overall, we have that with probability of at least $3/4$, there exists a partial stream $\sigma_{r}^* \in [n]^*$ (where $r$ is the random bits of $A$) with $|\sigma_{r}^*| \leq (t-1)k \leq \ell - k$, that for every partial stream $a \in [n]^k$ there exists a corresponding output $b \in [n]^k$ that satisfying Eq. \ref{eq:random-start-easy-to-predict}, which is exactly the requirement for pseudo-deterministic Algorithm.  

We are finally ready to construct a pseudo-deterministic error-zero algorithm $B$ for the $MIF(n,k=\ceil{\ell / (2s+3)})$ problem. In the initialization phase, given random bits $r_A$ for $\A$, algorithm $B$ construct $\sigma_{r_A}^*$ that satisfies Eq.~(\ref{eq:random-start-easy-to-predict}). 
Observe that during this construction $B$ feeds $\A$ with the partial stream $\sigma_{r_A}^*$. Algorithm $B$ fails in that mission with a probability of at most $1/4$. Then given update stream $e\in [n]$, algorithm $B$ feed $\A$ with $e$, and outputs the same element that $\A$ outputs.
Since $\A$ is zero error with error at most $1/3$, we have that $B$ is a pseudo-deterministic error-zero algorithm for $MIF(n,\ceil{\ell / (2s+3)})$ with error at most $1/3 + 1/4 = 7/12$. By Thm.~\ref{thm:pseudo-det-lower-bound}, we have that:
\begin{align*}
    s \geq \frac{c\ell}{s\log^2 n} \Rightarrow s \geq \frac{c'\sqrt{\ell}}{\log n},
\end{align*}
where $c > 0$ is a constant and $c' = \sqrt{c}$.  
 \end{proof}
 \begin{remark}
     \cite{stoeckl2023streaming} established an upper bound for a random start error zero streaming algorithm that solves $MIF(n,\ell)$ of $O\left(\left(\sqrt{\ell} + \ell^2/n\right)\log n\right)$. Observe that $\ell^2/n \geq \sqrt{\ell}$ iff $\ell \geq n^{2/3}$, for that regime the above upper bound reduce to $O\left(\ell^2/n \cdot \log n\right)$, which means that the lower bound of adversarially robust from Table \ref{table:stoeckl-results} (which is also a lower bound for random start error zero streaming algorithm) is tight up to polylog factors. For $\ell \leq n^{2/3}$, the upper bound of Stoeckl reduces to $O\left(\sqrt{\ell}\log n\right)$, which means that in this case the lower bound of Thm. \ref{thm:lower-bound-random-start} is tight, up to polylog factors.
 \end{remark}
A natural question is whether an algorithm with small memory, which can ask for an unlimited number of new random bits, but cannot store all of them, can achieve a better performance as compared to the random start model.
The answer is positive, at least when the length of the stream is small (i.e. $\ell = o(\sqrt{n})$) with some errors based on a very simple algorithm:
 \begin{theorem}
     There is an algorithm $\A$ with randomness on the fly (Definition \ref{def:randomness_on_the_fly}) that solves $MIF(n, \ell )$ against adaptive adversaries, with error $ \frac{\ell^2 + \ell}{2n}$ and can be implemented using      
     $O(\log n)$ bits of space.
 \end{theorem}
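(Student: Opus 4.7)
The plan is to describe and analyze the following very simple randomness-on-the-fly algorithm. Initialize $x \in [n]$ uniformly at random (using $\log n$ fresh bits). On each stream update $e_i$: if $e_i = x$, draw $\log n$ fresh random bits and resample $x$ uniformly from $[n]$, overwriting the old value; otherwise leave $x$ unchanged. At every step, output the current $x$. The only state maintained between updates is the single element $x \in [n]$, so the memory usage is $O(\log n)$. Any random bits read during a resample are used immediately to update $x$ and then discarded, so they are not charged to the space budget under Definition~\ref{def:randomness_on_the_fly}.

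For correctness, I would let $x_i$ denote the value of the variable after processing $e_i$ and call step $i$ a \emph{failure} if $x_i \in \{e_1,\dots,e_i\}$. The key invariant, proved by induction on $i$, is that conditional on no failure occurring up to step $i-1$, a failure can occur at step $i$ only if a resample happens at step $i$. Indeed, if $e_i \neq x_{i-1}$ then $x_i = x_{i-1} \notin \{e_1,\dots,e_{i-1}\}$ by the induction hypothesis and $x_i \neq e_i$ by assumption, so the invariant is preserved for free. Hence the only way to fail at step $i$ is the event that $e_i = x_{i-1}$ triggers a resample and the freshly drawn $x_i$ happens to land in the seen set $\{e_1,\dots,e_i\}$, which has size at most $i$.

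The crucial point, and the one place where the randomness-on-the-fly model is used, is that at a resample the new $x_i$ is drawn from fresh random bits that the adversary has never seen. Since the adversary's choice of $e_i$ can depend only on the outputs $x_0,\dots,x_{i-1}$, which are functions of random bits drawn strictly before step $i$, the new value $x_i$ is uniform on $[n]$ \emph{independently} of the adversarial stream $e_1,\dots,e_i$. Therefore the conditional probability of failing at step $i$ given a resample there is at most $|\{e_1,\dots,e_i\}|/n \leq i/n$, and so the unconditional probability of failing at step $i$ is at most $i/n$ as well.

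Finally, a union bound over the at most $\ell$ steps yields total failure probability at most $\sum_{i=1}^{\ell} i/n = \ell(\ell+1)/(2n) = (\ell^2+\ell)/(2n)$, matching the claimed error. The only conceptually nontrivial step is formalizing the independence argument in the previous paragraph — everything else is a straightforward induction plus union bound — so that is the ``main obstacle,'' though it is really just a clean invocation of the randomness-on-the-fly model.
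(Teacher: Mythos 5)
Your proposal is correct, and the analysis is essentially the paper's: a uniformly random candidate from $[n]$ fails at step $i$ with probability at most $i/n$, and a union bound over the $\ell$ steps gives error $(\ell^2+\ell)/(2n)$ in $O(\log n)$ space. The only difference is the algorithm itself: the paper simply draws a fresh uniform element of $[n]$ at \emph{every} turn, which makes the per-step bound immediate (the fresh draw is independent of the adversarially chosen prefix, so it lands in the seen set with probability at most $i/n$) and requires no invariant. Your lazy variant, which resamples only when the adversary hits the current candidate, is also valid; it needs the extra induction plus the independence-of-fresh-bits argument you give, but it buys nothing for the stated bound (the error guarantee and space are identical), its one mild advantage being that it consumes fresh randomness only on collisions and keeps the output stable between hits. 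Either way the claimed theorem follows.
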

 \begin{proof}
     Let $\A$ be an algorithm that in each turn chooses uniformly at random a number from $[n]$. In turn number $i \in [\ell]$ the probability that $\A$ chooses a number that has appeared in the stream so far is at most $i / n$. By the union bound, the probability of failure in the first $\ell$ turns is at most
     \begin{align*}
         \sum_{i=1}^\ell \frac{i}{n} = \frac{1}{n} \sum_{i=1}^\ell i = \frac{1}{n} \frac{\ell}{2} (\ell + 1) = \frac{\ell^2 + \ell}{2n}
     \end{align*}
 \end{proof}
 Observe that for $\ell = o(\sqrt{n})$ and constant error, we have an upper bound $O(\log n)$ on the space complexity for an algorithm with `randomness on-the-fly'. On the other hand, by Thm.~\ref{thm:lower-bound-random-start}, every random start error zero streaming algorithm  need at least $\Theta(\sqrt{\ell}/\log n)$ bits of space.
 
\bibliographystyle{plainnat}
\bibliography{missing_item_problem}
\end{document}